\newtheorem{theorem}{Theorem}
\newtheorem{lemma}[theorem]{Lemma}
\newtheorem{corollary}[theorem]{Corollary}
\begin{document}

\title{Classical uncertainty in predicting the future}
\author{Koray D\"{u}zta\c{s}}
\email{koray.duztas@emu.edu.tr}
\affiliation{Physics Department, Eastern Mediterranean  University, Famagusta, North Cyprus via Mersin 10, Turkey}

\begin{abstract}
In this work we scrutinize the deterministic nature of globally hyperbolic space-times from the point of view of an observer. We show that a space-time point $q \in M$ that lies to the future of an observer at $p \in M$, receives signals that are invisible (to be made precise) to the observer at $p$. Part of the initial data on a Cauchy surface, required to predict what happens at $q$, is also invisible to the observer at $p$. Therefore it is not possible for any observer to predict a future event with certainty. The uncertainty increases as one attempts to predict further future. An observer at $p$ can access the entire data to determine what happens at $q$, if and only if $q \in J^-(p)$. Classical uncertainty in prediction is not an intrinsic feature of the events in space-time. It adds up with the usual quantum mechanical uncertainty to limit our ability to predict the future. We also suggest a thought experiment to elucidate the subject.
 \end{abstract}
\pacs{04.20.Dw}
\maketitle

A space-time $(M,g)$ is said to be globally hyperbolic if the diamonds $J^+(p)\cap J^-(q)$ are compact for all $p,q \in M$, and strong causality holds at every point in $M$. Globally hyperbolic space-times admit Cauchy surfaces whose domain of dependence is the space-time manifold itself. Any signal sent to $p \in M$ is registered on this Cauchy surface. Therefore given initial conditions on this surface, one should be able to predict what happens at any point $p$ that lies to his future. In this sense general relativity is a deterministic theory. 

The deterministic nature of general relativity was hampered by the development of singularity theorems by Penrose and Hawking~\cite{singtheo}. According to these theorems, the formation of singularities is inevitable as a result of gravitational collapse. The presence of a singularity violates the condition of the compactness of the diamonds $J^+(p)\cap J^-(q)$ for the space-time to be globally hyperbolic. Thus, one cannot define a Cauchy surface and predictability breaks down. Cosmic censorship conjecture was proposed to circumvent this obstacle~\cite{ccc}. This conjecture states that the singularities that ensue as a result of gravitational collapse are always surrounded by an event horizon which disables their causal contact with distant observers. Thus, one can ignore the breakdown of causality at space-time singularities, since a distant observer does not encounter any effects propagating out of the singularities. Hawking argued that, there exists a surface about which an observer can only have limited information such as mass, angular momentum and charge when the singularity is covered by an event horizon. This brings a limitation to our ability to predict the future, even if the cosmic censorship conjecture is correct~\cite{breakdown}.

In this work we evaluate an observer's ability to predict the future in a globally hyperbolic space-time. We point out that there exists a surface about which an observer can access absolutely no information, and the data on this surface is required to predict what happens at a future point. Therefore, it is not possible for any observer to predict a future event with certainty. The uncertainty increases as one attempts to predict further future.  We also show that an observer at $p$ can access the entire data to determine what happens at $q$, if and only if $q \in J^-(p)$. We comment on the analogies and the discrepancies with the usual quantum mechanical uncertainty principle. Finally, we suggest a thought experiment to elucidate the subject of classical uncertainty.

We start with a simple lemma that leads crucial results. 

\begin{lemma} \label{invisible}
Let $p,q \in M$ such that $p<q$. Let $r \in J^-(q)\setminus J^-(p)$. Any past extendible causal curve from $r$ to $q$ is entirely contained in $J^-(q)\setminus J^-(p)$, i.e. in $\rm{Ext} (J^-(p))$.
\end{lemma}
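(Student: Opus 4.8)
The plan is to argue pointwise along the curve using nothing more than the transitivity of the causal relation $\le$ together with the definitions of $J^-(p)$ and $J^-(q)$; notably, strong causality and the compactness of the diamonds will not be needed for this particular statement. Let $\gamma$ be any past-extendible causal curve running from $r$ to $q$ (at least one causal curve from $r$ to $q$ exists because $r \in J^-(q)$), and let $x$ be an arbitrary point lying on the segment of $\gamma$ between $r$ and $q$. I would show that $x \in J^-(q)\setminus J^-(p)$; since $x$ is arbitrary, this establishes that the whole curve lies in $J^-(q)\setminus J^-(p)$.

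For the first containment, the portion of $\gamma$ from $x$ to $q$ is itself a future-directed causal curve, so $x \le q$ and hence $x \in J^-(q)$. The substantive step is to show $x \notin J^-(p)$, and I would do this by contradiction. Suppose instead that $x \in J^-(p)$, i.e. $x \le p$. The portion of $\gamma$ from $r$ to $x$ gives $r \le x$, and transitivity of $\le$ then yields $r \le p$, that is $r \in J^-(p)$. This contradicts the hypothesis $r \in J^-(q)\setminus J^-(p)$, so no point of $\gamma$ can belong to $J^-(p)$. As a sanity check on the endpoints, the terminal point $q$ indeed satisfies $q \notin J^-(p)$, since $q \le p$ combined with $p < q$ would force $p=q$ in a causal space-time, contradicting $p<q$.

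Combining the two steps, every $x$ on $\gamma$ satisfies $x \in J^-(q)$ and $x \notin J^-(p)$, so $\gamma \subseteq J^-(q)\setminus J^-(p)$. Since $J^-(p)$ is closed in a globally hyperbolic space-time, its complement coincides with its exterior, which gives the identification $J^-(q)\setminus J^-(p) \subseteq \mathrm{Ext}(J^-(p))$ stated in the lemma. I do not expect a genuine geometric obstacle here: the argument rests entirely on transitivity, and the only point requiring care is bookkeeping, namely ensuring that the qualifier \emph{past-extendible} is not read as enlarging the relevant segment beyond the part of $\gamma$ lying between $r$ and $q$, and that the endpoints $r$ and $q$ are treated consistently.
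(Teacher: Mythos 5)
Your proposal is correct and takes essentially the same route as the paper: assume some point of the curve lies in $J^-(p)$ and use transitivity of the causal relation to conclude $r \in J^-(p)$, contradicting the hypothesis. The only differences are cosmetic — you additionally spell out the containment in $J^-(q)$ and the endpoint check at $q$, which the paper leaves implicit.
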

\begin{proof} Assume a causal curve from $r$ intersects $J^-(p)$. Then there exists $p_0 \in J^-(p)$ such that $r<p_0<p$. This implies $r<p$, i.e. $r\in J^-(p)$, which is a contradiction.
\end{proof}
If a set of points are contained in $\rm{Ext} (I^-(p))$ (which is equal to $\rm{Ext} (J^-(p))$ for globally hyperbolic space-times), we define this set of points to be \emph{invisible} to an observer at $p$. (We adopt this definition from Galloway and Woolgar~\cite{invisible}). This definition is intuitive in the sense that information flows along causal curves, and no information can reach an observer at $p$ from the region which we define as invisible. Let us assume an observer at  $r \in J^-(q)\setminus J^-(p)$ sends a signal to $q$. This signal is represented by a future directed causal curve that has a past end point at $r$. Since, this signal is invisible to an observer at $p$, he/she cannot include the effect of this signal in his/her calculations to predict what happens at $q$. 

In globally hyperbolic space-times it is customary to work with Cauchy surfaces. The next theorem states that, at every Cauchy surface, there exists a region which is invisible to an observer at $p$. 

\begin{theorem} \label{non-empty}
Let $S$ be any Cauchy surface for the space-time $M$, or a partial Cauchy surface for $p,q$ such that $p<q$. The region $S\cap (J^-(q)\setminus J^-(p))$ is non-empty and invisible to $p$. Thus, the region $S\setminus J^-(p)$ is non-empty and invisible to  $p$. 
\end{theorem}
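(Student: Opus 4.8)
The plan is to reduce everything to a single non-emptiness statement. Once I know that $S\cap(J^-(q)\setminus J^-(p))\neq\emptyset$, the rest is immediate: this set sits inside $J^-(q)\setminus J^-(p)\subseteq \mathrm{Ext}(J^-(p))$, so by the definition adopted after Lemma~\ref{invisible} it is invisible to $p$; and since $S\cap(J^-(q)\setminus J^-(p))\subseteq S\setminus J^-(p)\subseteq \mathrm{Ext}(J^-(p))$, the larger region $S\setminus J^-(p)$ is likewise non-empty and invisible. So the whole content of the theorem lies in producing one point of $S$ that belongs to $J^-(q)$ but not to $J^-(p)$.

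First I would record that the \emph{ambient} set $J^-(q)\setminus J^-(p)$ is already non-empty, because $q$ itself lies in it: $q\in J^-(q)$ by reflexivity, while $q\notin J^-(p)$, since $p<q$ together with $q\le p$ would close a causal curve through $p$ and $q$ and contradict the strong causality built into global hyperbolicity. Next, working in the predictive configuration in which the event $q$ lies to the future of the (partial) Cauchy surface, $q\in J^+(S)$, I would drop a past-directed causal curve from $q$ down to $S$. Because $S$ is Cauchy, any past-inextendible causal curve issuing from $q\in J^+(S)$ meets $S$ exactly once, at a point $s\in J^-(q)\cap S$. The real task is to choose the curve so that $s\notin J^-(p)$.

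For this I would run the curve along the boundary $\partial J^-(q)$, taking $s$ on a past-directed null generator of $\partial J^-(q)$ emanating from $q$ and followed down to $S$. By the same transitivity that underlies Lemma~\ref{invisible}, this choice forces $s\notin J^-(p)$: were $s\in J^-(p)$ we would have $s\le p<q$, and as soon as one of these steps is timelike (the generic case $q\in I^+(p)$) this gives $s\ll q$, i.e. $s\in I^-(q)=\mathrm{int}\,J^-(q)$, contradicting $s\in\partial J^-(q)$. Hence $s\in S\cap(J^-(q)\setminus J^-(p))$ and the set is non-empty. The purely null case $q\in\partial J^+(p)$ I would dispose of by perturbation: $\mathrm{Ext}(J^-(p))$ is open (as $J^-(p)$ is closed in a globally hyperbolic space-time), so I would replace $q$ by a nearby point $q''\in I^+(p)\cap I^-(q)$, apply the timelike case to $q''$, and push the resulting point back into $S\cap(J^-(q)\setminus J^-(p))$ via $J^-(q'')\subseteq J^-(q)$.

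The main obstacle is precisely this non-emptiness step: guaranteeing that the \emph{protrusion} $J^-(q)\setminus J^-(p)$ does not pinch off somewhere above $S$ but genuinely reaches the surface. This is where the strictness $p<q$ does the essential work, since $q$ lying strictly to the future of $p$ makes its past cone properly contain, and keep spreading beyond, that of $p$ as one descends, so its trace on any Cauchy surface to the past is strictly larger than that of $p$; combined with the Cauchy property that every past-inextendible causal curve meets $S$, this yields a surviving point on $S$. I would also flag that the hypothesis $q\in J^+(S)$ is essential for the statement as written: if $q$ instead lay to the past of $S$, then $S\cap J^-(q)$ would be empty and the claim would fail, so the theorem must be read in the predictive setting in which the data surface $S$ lies to the past of the event $q$ one is trying to forecast.
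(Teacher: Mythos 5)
Your overall strategy differs from the paper's. The paper works \emph{from below}: it chooses a point $r\in J^-(q)\setminus J^-(p)$ lying to the past of $S$ and runs a causal curve from $r$ up to $q$; Lemma~\ref{invisible} keeps that entire curve in $\mathrm{Ext}(J^-(p))$, and the point where it crosses the Cauchy surface is the required $s$. You work \emph{from above}, descending from $q$ along a null generator of $\partial J^-(q)$. Both routes face the same essential difficulty --- guaranteeing that $J^-(q)\setminus J^-(p)$ actually reaches down to $S$ --- which the paper simply assumes by positing the existence of such an $r$, and which you correctly flag as ``the main obstacle.''

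Your attempt to close that obstacle, however, has two genuine gaps. First, a past-directed null generator of $\partial J^-(q)$ need not remain in $\partial J^-(q)$ all the way down to $S$: it leaves the achronal boundary and enters $I^-(q)$ beyond a conjugate point or a crossing of the past light cone, so the point $s$ where the geodesic meets $S$ need not lie on $\partial J^-(q)$ --- and your argument that $s\notin J^-(p)$ uses $s\in\partial J^-(q)$ essentially. Indeed $\partial J^-(q)\cap S$ can be empty outright (e.g.\ in a spatially compact globally hyperbolic space-time where the past light cone of $q$ reconverges and $S\subseteq I^-(q)$), and your heuristic that the past cone of $q$ ``keeps spreading beyond that of $p$ as one descends'' fails in exactly those cases, so it cannot substitute for a proof. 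Second, your disposal of the null case is vacuous: if $q\in\partial J^+(p)$ then $q\notin I^+(p)$, hence $I^+(p)\cap I^-(q)=\emptyset$ (any $x$ with $p\ll x\ll q$ would give $p\ll q$), and the perturbed point $q''$ you propose does not exist. The paper's route avoids this case split entirely, because Lemma~\ref{invisible} applies to \emph{any} causal curve ending at $q$ whose past endpoint lies outside $J^-(p)$, with no appeal to the boundary structure of $J^-(q)$; if you want to salvage your top-down construction, you need a separate argument that some point of $J^-(q)\setminus J^-(p)$ lies on or below $S$, which is precisely the step neither you nor the paper actually establishes.
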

\begin{proof}
If the sets $S\cap (J^-(q)\setminus J^-(p))$ and $S\setminus J^-(p)$ are non-empty they are trivially invisible to $p$, since they are both contained in $\rm{Ext} (J^-(p))$.

To prove that they are non-empty let us choose $r \in J^-(q)\setminus J^-(p)$ such that $r$ lies to the past of $S$. By lemma (\ref{invisible}) any causal curve from $r$ to $q$ is entirely contained in $\rm{Ext} (J^-(p))$. Since $S$ is a Cauchy surface any causal curve intersects $S$. Thus, for every $r \in J^-(q)\setminus J^-(p)$ that lies to the past of $S$, there exists $s \in M$ such that $s \in S\cap (J^-(q)\setminus J^-(p))$, i.e. $s \in S\setminus J^-(p)$.  
\end{proof} 
Let us proceed with the results that follow from theorem (\ref{non-empty}).
\begin{corollary}
Let $S$ be a Cauchy surface for the space-time $M$. No observer can access the entire initial data on $S$.
\end{corollary}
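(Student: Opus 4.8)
The plan is to derive this corollary directly from Theorem~\ref{non-empty}, which is the real engine here. The corollary asserts that no observer (sitting at some point $p \in M$) can access the entire initial data on a Cauchy surface $S$. Since ``access'' has been defined operationally in the excerpt as the complement of invisibility (an observer at $p$ accesses a region precisely when it lies in $J^-(p)$), the statement reduces to showing that for \emph{every} choice of observer location $p \in M$, some portion of $S$ fails to lie in $J^-(p)$, i.e. $S \setminus J^-(p) \neq \emptyset$.

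First I would fix an arbitrary observer at $p \in M$; the quantifier ``no observer'' means I must handle an arbitrary $p$, so nothing about $p$ can be assumed special. The key move is to produce a point $q$ with $p < q$, so that Theorem~\ref{non-empty} becomes applicable. In a globally hyperbolic space-time (indeed in any space-time without future endpoints for its inextendible causal curves) such a $q$ always exists: simply take any future-directed timelike curve emanating from $p$ and pick a point $q$ strictly to its future. With $p < q$ in hand, Theorem~\ref{non-empty} immediately yields that $S \setminus J^-(p)$ is non-empty and invisible to $p$. By the operational definition of accessibility, the data on this non-empty invisible portion of $S$ is inaccessible to the observer at $p$, so that observer cannot access the entire initial data on $S$. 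Since $p$ was arbitrary, no observer can.

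The step I expect to be the main (though modest) obstacle is the existence of a suitable future point $q$ with $p < q$: one must ensure the chosen Cauchy surface argument does not degenerate in the edge case where $p$ itself lies to the future of, or on, the surface $S$. If $p$ lies to the future of $S$, the relevant causal curves reaching $q$ still cross $S$ in the invisible region by the same Cauchy-surface crossing argument used in Theorem~\ref{non-empty}, so the conclusion is unaffected; one only needs $p < q$ together with the defining property that every inextendible causal curve meets $S$ exactly once. I would therefore emphasize that the construction of $q$ relies only on the chronological future $I^+(p)$ being non-empty, which holds at every point of a (time-oriented) space-time, and that the rest follows verbatim from the theorem. In short, the corollary is essentially a universally-quantified repackaging of Theorem~\ref{non-empty}, and the proof is a one-line invocation once the innocuous existence of $q$ is secured.
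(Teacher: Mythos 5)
Your proposal is correct and follows essentially the same route as the paper: fix an arbitrary observer at $p$, use the fact that the observer's timelike worldline guarantees a future point $q$ with $p<q$, and then invoke Theorem~\ref{non-empty} to conclude that $S\setminus J^-(p)$ is non-empty and invisible to $p$. Your extra care about the existence of $q$ and the position of $p$ relative to $S$ is harmless elaboration of what the paper treats as immediate.
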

\begin{proof}
An observer moves on a time-like curve. For each point $p$ on the curve there exists a future point $q$. The result trivially follows from  theorem (\ref{non-empty}), since $S\cap (J^-(q)\setminus J^-(p))$, and $S\setminus J^-(p)$ are non-empty and invisible to  $p$.
\end{proof}
It is not necessary for an observer at $p$ to access the entire initial on a Cauchy surface $S$ to predict a future event $q$, since $S\setminus J^-(q)$ is invisible to $q$ itself. $q$ only receives signals from $J^-(q)$, and these signals are registered at $S \cap J^-(q)$. Therefore the initial data on $S \cap J^-(q)$ is sufficient to determine what happens at $q$. A natural question at this stage is whether an observer at $p$ can access the data on $S \cap J^-(q)$  to predict what happens at $q$.
\begin{corollary}
It is not possible for any observer to predict a future event with certainty.
\end{corollary}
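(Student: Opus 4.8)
The plan is to turn the discussion preceding the statement into a formal argument by locating a single non-empty set that is simultaneously (i) part of the data that determines the outcome at $q$, and (ii) invisible to the observer at $p$. Once such a set is exhibited the conclusion is immediate: the observer must compute the state at $q$ while missing a genuine piece of its causal input.

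First I would make the same reduction as in the preceding corollary. An observer travels on a future-extendible timelike curve, so for any point $p$ on the worldline there is a future point $q$ with $p<q$. Fix a Cauchy surface (or a partial Cauchy surface for $p,q$) $S$, and recall from the paragraph above that the data on $S\cap J^-(q)$ suffices to determine what happens at $q$, since $q$ receives signals only from $J^-(q)$.

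The key step is the set identity
\[
S\cap\bigl(J^-(q)\setminus J^-(p)\bigr)=\bigl(S\cap J^-(q)\bigr)\setminus J^-(p),
\]
which displays the invisible region of Theorem~\ref{non-empty} as a subset of the determining data $S\cap J^-(q)$. By Theorem~\ref{non-empty} this subset is non-empty and invisible to $p$, so the determining data splits into a portion lying in $J^-(p)$, which the observer can access, and a non-empty portion lying in $\mathrm{Ext}(J^-(p))$, which is inaccessible. Hence no observer at $p$ can ever read off the full data on $S\cap J^-(q)$.

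The part I expect to be the real obstacle is arguing that this inaccessible data actually matters, i.e.\ that invisibility produces genuine uncertainty rather than a merely formal gap. Here I would invoke Lemma~\ref{invisible}: for any $r\in S\cap(J^-(q)\setminus J^-(p))$ a past-extendible causal curve joins $r$ to $q$ while remaining in $\mathrm{Ext}(J^-(p))$, so a signal emitted at $r$ demonstrably influences $q$ yet never reaches $p$. The observer at $p$ must therefore omit a causally relevant contribution when predicting $q$, and so cannot predict it with certainty. Since $q$ was an arbitrary future point of an arbitrary observer, the claim follows in full generality.
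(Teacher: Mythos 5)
Your proposal is correct and follows essentially the same route as the paper: cite Theorem~\ref{non-empty} for the non-emptiness and invisibility of $S\cap(J^-(q)\setminus J^-(p))$, observe that this set is a subset of the determining data $S\cap J^-(q)$, and conclude. Your additional appeal to Lemma~\ref{invisible} to justify that the invisible data is causally relevant merely makes explicit what the paper states in the discussion surrounding the corollary.
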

\begin{proof}
Let $p<q$. By theorem (\ref{non-empty}) $S\cap (J^-(q)\setminus J^-(p))$ is non-empty and invisible to $p$. Thus a subset of $S \cap J^-(q)$ is invisible to $p$. Therefore an observer at $p$ cannot predict what happens at $q$ with certainty.
\end{proof}
An observer at space-time point $p$ can construct his prediction about what happens at a future point $q$, upon the initial data that is accessible to him, which is restricted to $S \cap J^-(p)$. There exists an uncertainty in his prediction, since the region $S\cap (J^-(q)\setminus J^-(p))$ is invisible to him, and the data in this region is required to determine what happens at   $q$ with certainty. Let us also investigate what happens as one attempts to predict further future.
\begin{corollary}
The uncertainty increases as one attempts to predict further future.
\end{corollary}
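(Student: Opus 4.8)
The plan is to first make precise what \emph{uncertainty} means here, and then show that the relevant set of invisible-yet-necessary initial data grows as the target event is pushed to the future. Fix the observer at $p$ and let $q,q'$ satisfy $p<q<q'$, so that $q'$ represents a point in the ``further future'' than $q$. Following the discussion after Theorem (\ref{non-empty}), the data needed to determine what happens at $q$ lives on $S\cap J^-(q)$, while the observer can only access $S\cap J^-(p)$; the uncertainty region for predicting $q$ is therefore the invisible remainder $S\cap(J^-(q)\setminus J^-(p))$. Likewise the uncertainty region for predicting $q'$ is $S\cap(J^-(q')\setminus J^-(p))$. The claim to establish is that the latter \emph{strictly} contains the former.

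First I would establish monotonicity. Since $q<q'$, transitivity of the causal relation gives $J^-(q)\subseteq J^-(q')$, because any $x\leq q$ automatically satisfies $x\leq q'$. Intersecting with the fixed surface $S$ and deleting the fixed accessible set $J^-(p)$ both preserve inclusion, so $S\cap(J^-(q)\setminus J^-(p))\subseteq S\cap(J^-(q')\setminus J^-(p))$. This already shows the uncertainty does not decrease as one predicts further future.

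Next I would upgrade this to a strict inclusion, which is the substantive content. The idea is to reapply the mechanism of Theorem (\ref{non-empty}) to the pair $q<q'$ in place of $p<q$. Choose $r'\in J^-(q')\setminus J^-(q)$ lying to the past of $S$; by Lemma (\ref{invisible}), with the roles of $p,q$ now played by $q,q'$, every past extendible causal curve from $r'$ to $q'$ stays in $\mathrm{Ext}(J^-(q))$, and since $S$ is Cauchy it crosses $S$ at some $s'$. Then $s'\in S\cap J^-(q')$ but $s'\notin J^-(q)$, and because $p<q$ forces $J^-(p)\subseteq J^-(q)$ we also get $s'\notin J^-(p)$. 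Hence $s'$ belongs to the new uncertainty region $S\cap(J^-(q')\setminus J^-(p))$ but not to the old one $S\cap(J^-(q)\setminus J^-(p))$, so the inclusion is proper.

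The main obstacle is conceptual rather than technical: one must commit to reading ``the uncertainty increases'' as set-theoretic growth of the invisible-yet-relevant initial data. Under that reading the monotonicity is immediate and the strictness is a direct reuse of Theorem (\ref{non-empty}). The only point requiring care is guaranteeing that $J^-(q')\setminus J^-(q)$ actually contains a point to the past of $S$, so that the Cauchy property can be invoked to produce $s'$; this is exactly the existence step already taken for $r$ in the proof of Theorem (\ref{non-empty}), and it is secured here by the strict relation $q<q'$ together with global hyperbolicity.
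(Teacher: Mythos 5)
Your proposal is correct and follows essentially the same route as the paper: the paper writes the invisible region $S\cap(J^-(q_i)\setminus J^-(p))$ as a telescoping union of pieces $S\cap(J^-(q_j)\setminus J^-(q_{j-1}))$, each non-empty by Theorem (\ref{non-empty}) applied to consecutive points, which is exactly your monotone-inclusion-plus-new-non-empty-increment argument carried out along a whole chain. Your version is somewhat more explicit about what ``increases'' means and about where the strictness comes from, but the underlying mechanism is identical.
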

\begin{proof}
Let $p<q_1<q_2 \ldots <q_i<q_{i+1}<\ldots $. $S\cap (J^-(q_i)\setminus J^-(p))$ is non-empty and invisible to $p,q_1,q_2, \ldots q_{i-1}$. We have $ S\cap (J^-(q_i)\setminus J^-(p))=[S\cap (J^-(q_i)\setminus J^-(q_{i-1}))]\cup [S\cap (J^-(q_{i-1})\setminus J^-(q_{i-2}))]  \cup \ldots \cup[S\cap (J^-(q_1)\setminus J^-(p))]
$
Thus the uncertainty in predicting $q_{i+1}$ is larger than the uncertainty in predicting $q_i$.
\end{proof}
The classical uncertainty in prediction is generated by the region on a Cauchy surface that is invisible to an observer. As an observer attempts to predict further future this region enlarges, and the uncertainty increases.

The theorem (\ref{non-empty}) only applies to points that lie to the future of an observer. This can be generalised to include the whole space-time manifold.
\begin{theorem}
Let $p,q \in M$ such that $p\neq q$. An observer at $p$ can access the entire initial data on a Cauchy surface $S$ required to determine what happens at $q$, if and only if $q \in J^-(p)$.
\end{theorem}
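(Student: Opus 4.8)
The plan is to restate the claim in terms of visible regions on $S$. To determine what happens at $q$ one needs the data registered on $S\cap J^-(q)$, whereas an observer at $p$ can access only $S\cap J^-(p)$. Hence ``$p$ can access the entire required data'' means precisely $S\cap J^-(q)\subseteq S\cap J^-(p)$, and the theorem reduces to the equivalence $S\cap J^-(q)\subseteq S\cap J^-(p)\iff q\in J^-(p)$. I would prove the two implications separately.

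For the ``if'' direction, suppose $q\in J^-(p)$. Transitivity of the causal relation gives $J^-(q)\subseteq J^-(p)$, since any $r\le q\le p$ satisfies $r\le p$. Intersecting with $S$ yields $S\cap J^-(q)\subseteq S\cap J^-(p)$, so every point carrying data relevant to $q$ already lies in the region visible to $p$. This direction is immediate and parallels the transitivity step used in lemma (\ref{invisible}).

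For the ``only if'' direction I would argue by contraposition: assuming $q\notin J^-(p)$, I want to exhibit a point $s\in S\cap(J^-(q)\setminus J^-(p))$, i.e. required data invisible to $p$. The strategy is to find a point $r\in J^-(q)\setminus J^-(p)$ lying to the past of $S$ and then rerun the argument of theorem (\ref{non-empty}): the future-directed causal curve from $r$ to $q$ stays entirely in $\mathrm{Ext}(J^-(p))$ by lemma (\ref{invisible})---whose proof uses only $r\notin J^-(p)$ and not the relation between $p$ and $q$---and, since $S$ is a Cauchy surface, this curve meets $S$ at the desired invisible point $s$. When $p<q$ the required $r$ is supplied directly by theorem (\ref{non-empty}).

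The main obstacle is the remaining case, in which $p$ and $q$ are causally unrelated, where one must still produce such an $r$; equivalently, one must show that the past ``shadow'' $S\cap J^-(q)$ is not contained in $S\cap J^-(p)$. Here I would try to reconstruct $q$ from its shadow: since $q\notin J^-(p)$ and $J^-(p)$ is closed in a globally hyperbolic space-time, some neighbourhood of $q$ misses $J^-(p)$, and I would propagate a past-directed causal curve from $q$ through this neighbourhood down to $S$ while keeping it outside the closed set $J^-(p)$. Making this evasion rigorous---guaranteeing that the curve reaches $S$ without re-entering $J^-(p)$---is the delicate point; it relies on $J^-(p)$ being closed and on the compactness of $S\cap J^-(q)$, and is precisely where global hyperbolicity does the real work.
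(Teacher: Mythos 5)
Your proposal follows essentially the same route as the paper: the same reduction to the containment $S\cap J^-(q)\subseteq S\cap J^-(p)$, the same three-way case split ($q\in J^+(p)$ disposed of by Theorem~\ref{non-empty}, $q\in J^-(p)$ by transitivity giving $J^-(q)\subseteq J^-(p)$, and the causally unrelated case by rerunning Lemma~\ref{invisible} and Theorem~\ref{non-empty}). The one ``delicate point'' you flag---actually producing $r\in J^-(q)\setminus J^-(p)$ lying to the past of $S$ in the spacelike case, so that its causal curve to $q$ meets $S$ outside $J^-(p)$---is not resolved in the paper either, which simply asserts that one can ``proceed as in theorem~(\ref{non-empty})''; so your sketch matches the paper's argument and is, if anything, more explicit about where the remaining work lies.
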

\begin{proof}
We consider the three possibilities: $q \in J^+(p)$, $ q \in J^-(p) $, and $q \in M \setminus [J^+(p) \cup J^-(p)]=\rm{Ext} [J^+(p) \cup J^-(p)]$.
If $q \in J^+(p)$, we have proved that $S\cap (J^-(q)\setminus J^-(p))$ is non-empty and $p$ cannot predict what happens at $q$.

Let $ q \in J^-(p) $. One should access the data on $S \cap J^-(q)$ to determine $q$ with certainty. Since $J^-(q)\subset J^-(p)$, $[S \cap J^-(q)] \subset [S \cap J^-(p)]$. An observer at $p$ can determine what happens at $q$ with certainty.

Let $q \in \rm{Ext} [J^+(p) \cup J^-(p)]$. Let $r \in (J^-(q)\setminus J^-(p))$. One can proceed as in lemma (\ref{invisible}) to show that any causal curve from $r$ to $q$ is invisible to $p$. Let $S$ be a Cauchy surface. Again one can proceed as in theorem (\ref{non-empty}) to show that $S\cap (J^-(q)\setminus J^-(p))$ is non-empty and invisible to $p$. An observer at $p$ cannot determine what happens at $q$ with certainty.
\end{proof}

Though an observer cannot predict a future event, he can  determine a past event by using the initial data on a Cauchy surface, provided that he has an accurate theory. An observer moves on a time-like curve. Therefore a point at the future of an observer lies to his past after finite time. Then it can be determined with certainty. The point in determining a past event by using the data on a Cauchy surface whereas one could simply observe it, is that the former approach allows one to determine the state of a system without interacting with the system. In the original formulation of quantum mechanical uncertainty Heisenberg suggests a thought experiment where one measures the position of an electron by illuminating it with gamma rays.(see e.g.~\cite{riggs}) This results in a disturbance of the momentum of electron, i.e. a simultaneous determination of position and momentum is not possible. Heisenberg concluded that, this entailed the failure of classical causality at the quantum level~\cite{heisenberg}. If the quantum mechanical uncertainty was to be interpreted in terms of a simple measurement error and disturbance of another quantity as in its original form, one could argue that the classical causality still holds since it is possible to determine a past event with certainty without observing it. However according to the Copenhagen interpretation, the quantum mechanical uncertainty is an intrinsic feature of events. A physical quantity does not have a definite value unless it is measured. On the other hand, the classical uncertainty in predicting an event described in this work, cannot be interpreted as an intrinsic feature. It merely stems from the fact that one cannot access the entire initial data required to predict a future event. The classical uncertainty derived in this work and the usual quantum mechanical uncertainty are independent concepts which add up to bring a limitation to our ability to predict the future. However as far as past events are concerned, we are only limited by the intrinsic uncertainty imposed by quantum mechanics. 

It is convenient to elucidate the concept of classical uncertainty with a thought experiment. Let us assume that we release a particle in a gravitational field. We have a perfect classical theory to describe the dynamics of this particle in the gravitational field, so that we can predict the position, momentum or any relevant physical observable at any time with absolute accuracy; apart from the intrinsic quantum mechanical uncertainty. We make a prediction at $t=t_0$ to assign certain values to the physical observables at $t=t_b$. Let us envisage that a package of gamma rays were emitted from a distant galaxy a long time before $t_0$. These rays reach us at $t=t_a<t_b$ to interact with the particle and lead to deviations on the world line of the particle.  It is not possible for us to include the effect of gamma rays in our calculations at $t=t_0$, since these rays are \emph{invisible} to us at $t=t_0$ as discussed in lemma (\ref{invisible}). If there exists a Cauchy surface that is intersected by these rays, the relevant data on this Cauchy surface is also \emph{invisible} to us, as stated in theorem (\ref{non-empty}). Our predictions for the state of the particle at $t=t_b$ will be falsified. Since we cannot access the entire initial data that influences a future event, there exists a classical uncertainty in our prediction of the future in addition to the usual quantum mechanical uncertainty in the state of the particle.

\end{document}